\newtheorem{theorem}{Theorem}[section]
\newtheorem{lemma}[theorem]{Lemma}
\newtheorem{fact}[theorem]{Fact}
\newcommand{\erclogowrapped}[1]{%
\setlength\intextsep{0pt}%
\begin{wrapfigure}[3]{r}{#1*\real{1.1}}%
\includegraphics[width=#1]{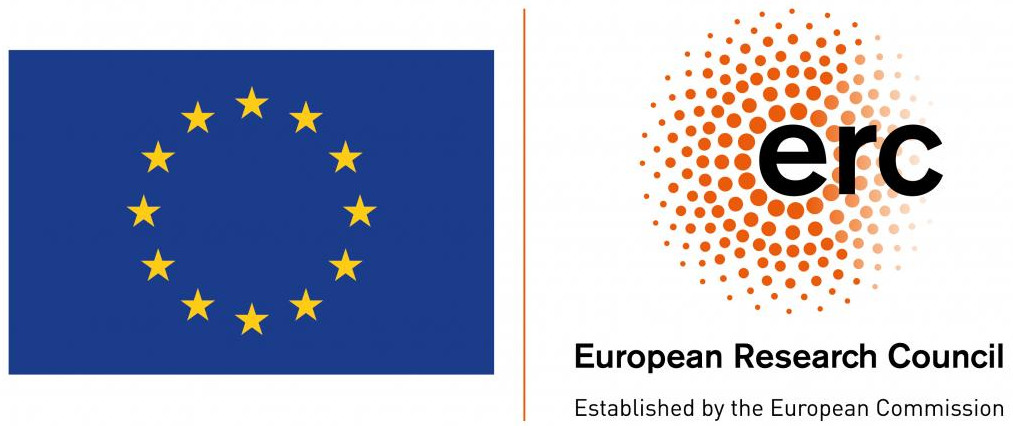}%
\end{wrapfigure}%
}
\newcommand{\counter}{\tau}
\newcommand{\timeMF}{t_{\textrm{static}}}
\newcommand{\IncBMF}{\textsc{IncBMF}}
\newcommand{\boundK}{\mu}
\newcommand{\totalTB}{t_{\textrm{total}}}
\newcommand{\approxF}{F}
\newcommand{\optF}{F^*}
\title{Incremental Approximate Maximum Flow in $m^{1/2+o(1)}$ update time}
\author{Gramoz Goranci\thanks{Institute for Theoretical Studies, ETH Zurich, Switzerland. This project is supported by Dr. Max R\"ossler, the Walter Haefner Foundation and the ETH Z\"urich Foundation.} 
\and Monika Henzinger\thanks{Faculty of Computer Science, University of Vienna, Austria. See funding information in the acknowledgement section.}
} 
\date{}
\begin{document}

\maketitle

\thispagestyle{empty}

\begin{abstract}
We show an $(1+\epsilon)$-approximation algorithm for maintaining maximum $s$-$t$ flow under $m$ edge insertions in $m^{1/2+o(1)} \epsilon^{-1/2}$ amortized update time for directed, unweighted graphs. This constitutes the first sublinear dynamic maximum flow algorithm in general sparse graphs with arbitrarily good approximation guarantee. 
\end{abstract}


\section{Introduction}
The maximum flow problem is one of the cornerstone and the most studied problem in combinatorial optimization. It is often used as subroutine for solving other prominent graph problems~(e.g., Gomory-Hu Trees~\cite{GomoryH61}, Sparsest Cut~\cite{KhandekarRV09}), performing divide-and-conquer on graphs and has found several applications across many areas including computer vision, clustering and scientific computing. Designing fast maximum flow algorithms has been an active area of research for decades, with recent advances making tremendous progress towards the quest of designing a near-linear time algorithm~\cite{ChristianoKMST11, Sherman13, Madry13,KelnerLOS14,Peng16,Madry16, Sherman17, LiuS20, ChenKLPPS22}.

Recently, we have witnessed a growing interest in designing dynamic algorithms for computing maximum flows in dynamically changing graphs~\cite{ItalianoNSW11, CheungKL13, AbrahamDKKP16, Dahlgaard16, GuptaK18, GoranciRST20, CHenGHPS20}. Despite this, the current best-known algorithms either incur super-constant approximation factors~\cite{GoranciRST20,CHenGHPS20} or achieve competitive update times only for sufficiently dense graphs~\cite{AbrahamDKKP16}. Moreover, all previous works on dynamic flows are restricted to undirected graphs. From a (conditional) lower bound perspective, for any $\delta > 0$, it is known~\cite{Dahlgaard16} that no algorithm can exactly maintain maximum flow in $O(m^{1-\delta})$ amortized time per operation, even when restricted to algorithms that support edge insertions, unless the OMv conjecture~\cite{HenzingerKNS15} is false. This suggests that approximation is necessary to achieve sublinear running times for general sparse graphs. 

In this paper, we show a simple generic algorithmic framework for maintaining approximate maximum flows under edge insertions.

\begin{theorem} \label{thm: mainThm}
Let $G=(V,E)$ be an initially empty, directed, unweighted $n$-vertex graph, $s$ and $t$ be any two vertices, and $\epsilon > 0$, $\boundK \in [0,n]$ be two parameters. If there is 
\begin{enumerate}
\itemsep0em
    \item[\textup{(a)}] an incremental algorithm $\textup{\IncBMF}(G,s,t,\boundK)$ for inserting $m$ edges and maintaining $s$-$t$ maximum flow whose value is bounded by $\boundK$ in total update time $\totalTB(m,n,\boundK)$ and $q(m,n)$ query time, and
    \item[\textup{(b)}] a static algorithm for computing exact $s$-$t$ maximum flow in an $n$-vertex, $m'$-edge graph in $\timeMF(m',n)$ time, where $m' \leq m$,
\end{enumerate}
then we can design an incremental algorithm for maintaining a $(1+\epsilon)$-approximate $s$-$t$ maximum flow under $m$ edge insertions in \[ \frac{\totalTB(m,n,\boundK+1)}{m} +  \frac{\timeMF(m,n)}{\epsilon \boundK} + q(m,n) \]
amortized update time and $q(m,n)$ query time. 	
\end{theorem}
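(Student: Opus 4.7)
The plan is to use $\boundK$ as a threshold that splits the execution into two regimes: while the $s$-$t$ maximum flow is at most $\boundK$ we maintain it exactly using $\IncBMF$, and once it exceeds $\boundK$ we switch to periodic recomputation with the static algorithm, leveraging the fact that an absolute error of $\epsilon \boundK$ is tolerable once the true flow is at least $\boundK$.

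In the first regime we run $\IncBMF(G,s,t,\boundK+1)$ on the stream of insertions and, after each insertion, issue one query for the current value. As long as this value is at most $\boundK$ it is the exact maximum flow and can be returned to the user. Because the graph is unweighted, each insertion can raise the maximum flow by at most $1$, so the first update after which $\IncBMF$ reports a value equal to $\boundK+1$ pinpoints the exact moment at which the true maximum flow first exceeds $\boundK$. At that moment we freeze $\IncBMF$, compute a maximum flow $\approxF$ from scratch using the static algorithm, and enter the second regime; subsequently we recompute $\approxF$ from scratch after every $\lceil \epsilon \boundK \rceil$ further insertions and answer every query with the most recently computed value.

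For the approximation guarantee, fix any query in the second regime and let $\optF$ denote the true maximum flow at that moment. By monotonicity under insertions, $\approxF \le \optF$, and since at most $\lceil \epsilon \boundK \rceil$ insertions have occurred since the last recomputation, each raising $\optF$ by at most one, we get $\optF - \approxF \le \lceil \epsilon \boundK \rceil$. The second regime is entered only after the true flow reaches $\boundK+1$ and $\approxF$ is non-decreasing over subsequent recomputations, hence $\approxF \ge \boundK$ always, yielding $\optF \le (1+\epsilon)\approxF$ after absorbing the ceiling into a mild rescaling of $\epsilon$.

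For the running time, the first regime contributes at most $\totalTB(m,n,\boundK+1)$ in total plus one $q(m,n)$-time query per insertion, i.e. $\totalTB(m,n,\boundK+1)/m + q(m,n)$ amortized per update; the second regime performs at most $m/(\epsilon \boundK)$ recomputations of cost $\timeMF(m,n)$ each, contributing $\timeMF(m,n)/(\epsilon \boundK)$ amortized per update; user queries cost a further $q(m,n)$ (or $O(1)$ in the second regime). Summing matches the claimed bound. The main subtlety, in my view, is the changeover between regimes: one must ensure that at the switch point the algorithm commits to an exact optimum, so that the invariant $\approxF \ge \boundK$ holds throughout Phase 2, which is precisely the reason for running $\IncBMF$ with threshold $\boundK+1$ rather than $\boundK$ and for exploiting the unit-capacity monotonicity of the maximum flow.
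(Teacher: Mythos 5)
Your proposal is correct and follows essentially the same strategy as the paper: run \textsc{IncBMF} with threshold $\boundK+1$ to maintain the flow exactly while it is small, then exploit the fact that each unit-capacity insertion raises the optimum by at most $1$ to recompute statically only every $\Theta(\epsilon\boundK)$ insertions once the flow exceeds $\boundK$, with the invariant $\approxF \ge \boundK$ giving the $(1+\epsilon)$ ratio. The only cosmetic differences are that you trigger an extra static recomputation exactly at the regime switch (the paper instead keeps the exact value $\boundK+1$ reported by \textsc{IncBMF} and waits $\epsilon\boundK$ further insertions before the first rebuild) and that you handle the ceiling by rescaling $\epsilon$, both of which are harmless.
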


For maintaining exact maximum flows whose value is bounded by $\boundK$, we slightly adapt an incremental version of the Ford-Fulkerson~\cite{ford1962flows} algorithm, which was initially observed by Heniznger~\cite{Henzinger97} and later by Gupta and Khan~\cite{GuptaK18} (cf. Theorem~\ref{thm:incrBMF}). This gives an incremental algorithm with $O(m \boundK)$ total update time and $O(1)$ query time. The recent breakthrough result due to Chen, Kyng, Liu, Peng, Probst Gutenberg, and Sachdeva~\cite{ChenKLPPS22}~(cf. Theorem~\ref{thm: exactMaxFlow}) shows a static exact maximum flow algorithm that runs in $m^{1+o(1)}$ time. Plugging these bounds in Theorem~\ref{thm: mainThm} and choosing $\boundK = m^{1/2+o(1)}\epsilon^{-1/2}$ yields the main result of this paper, which we summarize in the theorem below.

\begin{theorem} \label{thm: mainThm2}
	Given an initially empty, directed, unweighted  graph $G=(V,E)$, any two vertices $s$ and $t$ in $V$, and any $\epsilon > 0$, there is an incremental algorithm that maintains a $(1+\epsilon)$-approximate maximum $s$-$t$ flow  in $G$ under $m$ edge insertions in $m^{1/2+o(1)} \epsilon^{-1/2}$ amortized update time. The algorithm supports queries about the value of the maintained flow in $O(1)$ time.
\end{theorem}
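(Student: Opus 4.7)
The plan is to apply Theorem~\ref{thm: mainThm} as a black box by plugging in the two ingredients that are already in hand: the incremental bounded max-flow algorithm obtained via the Henzinger/Gupta--Khan adaptation of Ford--Fulkerson (Theorem~\ref{thm:incrBMF}), which gives $\totalTB(m,n,\boundK) = O(m\boundK)$ and $q(m,n) = O(1)$, and the static exact max-flow algorithm of Chen, Kyng, Liu, Peng, Probst Gutenberg and Sachdeva (Theorem~\ref{thm: exactMaxFlow}), which gives $\timeMF(m,n) = m^{1+o(1)}$. Substituting these into the amortized update time of Theorem~\ref{thm: mainThm} yields
\[
\frac{O\bigl(m(\boundK+1)\bigr)}{m} + \frac{m^{1+o(1)}}{\epsilon \boundK} + O(1) \;=\; O(\boundK) + \frac{m^{1+o(1)}}{\epsilon\,\boundK}.
\]

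Next I would balance the two terms by setting $\boundK = m^{1/2+o(1)}\epsilon^{-1/2}$, where the $m^{o(1)}$ factor is chosen to absorb the $m^{o(1)}$ overhead coming from Chen et al.'s static solver. Plugging in produces the claimed amortized update time $m^{1/2+o(1)}\epsilon^{-1/2}$, while the query time $O(1)$ is inherited directly from the bounded-flow subroutine.

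The main (minor) obstacle is the admissibility requirement $\boundK \in [0,n]$ from the hypothesis of Theorem~\ref{thm: mainThm}. For unweighted simple graphs the $s$-$t$ max flow is always at most $n-1$, so if the balanced value $m^{1/2+o(1)}\epsilon^{-1/2}$ exceeds $n$ I would simply cap $\boundK$ at $n-1$; this happens only when $\epsilon$ is so small (relative to the density) that the incremental exact algorithm alone already runs in $O(n) = m^{1/2+o(1)}\epsilon^{-1/2}$ amortized time, so the bound is preserved. A symmetric check handles the very sparse regime where $m$ is small. With these edge cases resolved, Theorem~\ref{thm: mainThm2} follows directly from Theorem~\ref{thm: mainThm}.
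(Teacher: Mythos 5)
Your proposal is correct and matches the paper's own derivation of Theorem~\ref{thm: mainThm2}: plug Theorem~\ref{thm:incrBMF} ($\totalTB(m,n,\boundK)=O(m\boundK)$, $q=O(1)$) and Theorem~\ref{thm: exactMaxFlow} ($\timeMF(m,n)=m^{1+o(1)}$) into Theorem~\ref{thm: mainThm} and balance with $\boundK = m^{1/2+o(1)}\epsilon^{-1/2}$. Your extra remark about capping $\boundK$ at $n$ is a reasonable tidying of an edge case the paper leaves implicit, but the argument is essentially identical.
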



We believe that our approach to dynamic flows may serve as the basis for designing new fully-dynamic maximum flow algorithms with competitive approximation ratio.

\section{Preliminaries}


In the following we settle some basic notation, as well as review definitions and algorithms for computing flows on graphs. 

\paragraph{Maximum Flow.}

Let $G=(V,E)$ be a directed, unweighted graph with $n$ vertices and $m$ edges, let $s \in V$ be a \emph{source} vertex and let $t \in V$ be a \emph{target} vertex. An \emph{flow} from $s$ to $t$ in $G$ is a function $f: E \rightarrow \mathbb{R}^{+}$ that maps each edge to a non-negative real number; the value $f(e)$ represents the amount of flow sent along $e$. A flow must satisfy the following properties: (i) for each $e \in E$, we have $f(e) \leq  1$, known as \emph{capacity constraints}, and (ii) for each $v \in V \setminus \{s,t\}$, $\sum_{(u,v) \in E} f(u,v) = \sum_{(v,u)} f(v,u)$, known as \emph{conservation constraints}. The \emph{value} of a flow $f$ is the amount of flow leaving the source $s$ minus the amount flow flow entering $s$, i.e., $v(f) = \sum_{(s,u) \in E} f(s,u) - \sum_{(u,s) \in E} f(u,s)$. In the \emph{maximum $s$-$t$ flow} problem, the goal is to find a flow $f$ with the largest $v(f)$, called the {\emph maximum $s$-$t$ flow} . Let $\optF = \max\{v(f) \mid f \text{ is a flow in } G \}$. Note that while $\optF$ is unique, there might be multiple maximum $s$-$t$ flows attaining $\optF$.

\paragraph{Residual graph and Augmenting Paths.} Given a directed, unweighted graph $G=(V,E)$, and a flow $f$ from $s$ to $t$ in $G$, we let $G_f=(V,E_f)$ be the \emph{residual graph} of $G$ with respect to $f$, where $E_f$ contains all edges of $E$, except that their direction is reversed if $f(e) = 1$. An edge whose direction in $G_f$ is reversed is referred to as a \emph{backward} edge. Otherwise, the edge is a \emph{forward} edge.  An augmenting path $P$ from $s$ to $t$ in $G$ is a  simple directed path from $s$ to $t$ in $G_f$. We next review a powerful result relating the residual graphs and optimal flows.

\begin{lemma}[\cite{ford1962flows}] \label{lem:FFoptimal}
	
	If there is no directed path from $s$ to $t$ in the residual graph $G_f$, then the flow $f$ is a maximum $s$-$t$ flow.
\end{lemma}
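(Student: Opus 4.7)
The plan is to prove this via the standard max-flow/min-cut duality argument: exhibit an explicit $s$-$t$ cut in $G$ whose capacity equals $v(f)$, and then invoke the easy direction of weak duality, namely that no flow value can exceed any cut capacity. This strategy is forced on us, because the only hypothesis we are given is a reachability statement in $G_f$, and the natural way to turn reachability into an upper bound on $v(f)$ is to use the reachable set itself as a cut.

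First, I would let $S \subseteq V$ be the set of vertices reachable from $s$ in $G_f$, and set $T = V \setminus S$. By hypothesis $t \notin S$, so $(S,T)$ is a bona fide $s$-$t$ cut. Next, I would examine the edges of $G$ crossing this cut. A forward-crossing edge $(u,v) \in E$ with $u \in S$, $v \in T$ must be saturated, i.e., $f(u,v)=1$: otherwise it would still appear as a forward edge in $G_f$ and would extend reachability into $T$, contradicting the definition of $S$. Symmetrically, a backward-crossing edge $(u,v) \in E$ with $u \in T$, $v \in S$ must satisfy $f(u,v)=0$: otherwise it would be reversed in $G_f$ and again extend reachability.

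Then I would invoke the standard telescoping identity obtained by summing the conservation constraint at each vertex in $S \setminus \{s\}$ together with the definition of $v(f)$ at $s$; all terms corresponding to edges internal to $S$ cancel, leaving
\[
v(f) \;=\; \sum_{\substack{(u,v)\in E\\ u\in S,\, v\in T}} f(u,v) \;-\; \sum_{\substack{(u,v)\in E\\ u\in T,\, v\in S}} f(u,v).
\]
Combined with the two observations above, the first sum equals the number of forward-crossing edges (the capacity of the cut $(S,T)$) and the second sum is $0$, so $v(f)$ equals the capacity $|E(S,T)|$ of this cut.

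Finally, applying the very same telescoping identity to any feasible flow $g$, together with the capacity constraints $0 \le g(e) \le 1$, yields $v(g) \le |E(S,T)|$ for every flow $g$, so $\optF \le |E(S,T)| = v(f)$, and hence $f$ is maximum. There is no real obstacle here; the only mildly delicate point is bookkeeping in the telescoping step that expresses $v(f)$ as the net flow across an arbitrary $s$-$t$ cut, but this is entirely routine.
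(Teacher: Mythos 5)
The paper offers no proof of this lemma at all---it is stated as a classical fact with a citation to Ford--Fulkerson---so there is nothing internal to compare against; your argument is the canonical min-cut-certificate proof and is the right way to establish it. One caveat is worth flagging, and it stems from the paper's slightly non-standard definition of the residual graph (an edge is reversed in $G_f$ if and only if $f(e)=1$, with no backward residual arc for partially used edges). Your claim that a backward-crossing edge $(u,v)$ with $u \in T$, $v \in S$ must have $f(u,v)=0$ ``otherwise it would be reversed in $G_f$'' is only justified under that definition when $f$ is integral: if $0 < f(u,v) < 1$ the edge is \emph{not} reversed, so reachability is not violated, yet your telescoping identity then gives $v(f) = |E(S,T)| - \sum f(u,v) < |E(S,T)|$ and the final equality with the cut capacity fails. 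Indeed, with fractional flows the lemma as literally stated under the paper's definitions is false (e.g., $f(s,t)=1$, $f(t,a)=f(a,s)=\tfrac12$ on the triangle $s,a,t$ leaves $t$ unreachable in $G_f$ while $v(f)=\tfrac12 < F^* = 1$). In the paper's actual use the flow is always $\{0,1\}$-valued, produced by augmenting paths in a unit-capacity graph, so your proof is correct in every case that matters; it would just be worth stating the integrality assumption explicitly, or working with the standard residual graph in which both directions of a partially used edge are present.
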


Another useful fact is giving in the lemma below.

\begin{fact} \label{fact:IncreaseFlow}
	If there exists a directed path $P$ from $s$ to $t$ in $G_f$,  pushing flow along $P$ in $G$ increases the value of the flow $f$ by at exactly one. 
\end{fact}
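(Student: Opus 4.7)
The plan is to construct the augmented flow $f'$ explicitly and verify that (i) it is a valid flow and (ii) its value equals $v(f)+1$. Since $G$ is unweighted, capacities are $1$, and the definition of $G_f$ only flips an edge when $f(e)=1$; consequently, with integer starting flow, every edge satisfies $f(e)\in\{0,1\}$ and each edge of $P$ is either a \emph{forward} edge with $f(e)=0$ or a \emph{backward} edge (originally $(v,u)\in E$ with $f(v,u)=1$ and appearing in $G_f$ as $(u,v)$).

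First I would define $f'$ by traversing $P$ from $s$ to $t$: for each forward edge $e$ on $P$ set $f'(e):=1$, for each backward edge corresponding to $(v,u)\in E$ with current flow $1$ set $f'(v,u):=0$, and leave $f'(e):=f(e)$ on all edges of $E$ not touched by $P$. Capacity constraints are immediate since each updated value lies in $\{0,1\}$.

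Next I would verify conservation. Since $P$ is a simple path in $G_f$, any intermediate vertex $w\in V\setminus\{s,t\}$ visited by $P$ is entered once and exited once in $G_f$. I would do a small case split on the four combinations of forward/backward for the incoming and outgoing arcs of $P$ at $w$; in each case the net change to $\sum_{(u,w)\in E}f(u,w)-\sum_{(w,u)\in E}f(w,u)$ is zero (for example, forward in and forward out means $+1$ on inflow and $+1$ on outflow; backward in and forward out means $-1$ on outflow and $+1$ on outflow; etc.). Vertices not on $P$ are untouched, so conservation holds globally.

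Finally I would evaluate the change in $v(f)$ by looking only at the first edge of $P$, which is incident to $s$. If that edge is forward, it is some $(s,u)\in E$ whose flow increases from $0$ to $1$, adding $1$ to $\sum_{(s,u)\in E}f(s,u)$; if it is backward, it corresponds to $(u,s)\in E$ whose flow decreases from $1$ to $0$, subtracting $1$ from $\sum_{(u,s)\in E}f(u,s)$. Either way $v(f')=v(f)+1$. The only subtlety that needs attention is the case analysis at intermediate vertices, and ensuring that the simplicity of $P$ prevents the same edge from being updated twice in conflicting ways; beyond that the argument is routine.
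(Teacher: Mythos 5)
Your proof is correct and is the standard augmenting-path argument (explicit construction of $f'$, capacity check, conservation via the four-way forward/backward case split at internal vertices, and the value change read off at $s$); the paper states this Fact without any proof, so there is nothing to compare against. Your explicit restriction to integral flows is also the right call: under the paper's definition of $G_f$ (an edge is reversed only when $f(e)=1$), the statement would fail for fractional flows with $0<f(e)<1$, but integrality is maintained throughout Algorithm~\ref{alg:incrementalBMF}, so the assumption is harmless in context.
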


\paragraph{Exact Maximum Flow in Directed Graphs.} Our incremental approximate algorithm heavily relies on the ability to compute a maximum $s$-$t$ flow quickly. Hence, we use the current breakthrough result~\cite{ChenKLPPS22} that achieves an exact, almost-linear algorithm for the maximum flow problem on directed graphs.\footnote{The algorithm extends to graphs with polynomially bound weights, but for the purposes of this paper and simplifying the presentation, we only state the unweighted version of this result}


\begin{theorem}[\cite{ChenKLPPS22}] \label{thm: exactMaxFlow}
	For any directed, unweighted graph $G=(V,E)$ and any two vertices $s$ and $t$, there is an algorithm that computes an \emph{exact} maximum $s$-$t$ flow in $G$ in $m^{1+o(1)}$ time.
\end{theorem}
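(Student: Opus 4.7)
The plan is to reduce exact max-flow to a sequence of $m^{1+o(1)}$ approximate \emph{minimum-ratio cycle} problems via a robust interior-point method (IPM), and then to implement each such problem in amortized $m^{o(1)}$ time using a dynamically maintained hierarchy built from low-stretch spanning trees and expander decompositions.

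First I would cast max-flow as a min-cost flow LP (for instance, adding a reverse $(t,s)$ arc of unit cost and large capacity with zero cost on other edges) and deploy a potential-function IPM on a weighted central path. A textbook $\ell_2$ IPM needs $\tilde O(\sqrt m)$ iterations but each iteration requires an exact Laplacian-like solve, which is too expensive. Instead I would use a robust IPM whose step direction only needs to be an $m^{o(1)}$-approximate minimizer of $\langle g,\Delta\rangle / \|L\Delta\|_1$ over circulations $\Delta$, where the gradient $g$ and reweighting $L$ drift only slightly between steps. Such robustness analyses increase the iteration count to $m^{1+o(1)}$, which is acceptable provided the per-step work is $m^{o(1)}$.

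Second, I would implement the approximate min-ratio cycle subroutine by maintaining a low-stretch spanning tree $T$ together with a link-cut-style structure that extracts the best fundamental tree cycle in $\tilde O(1)$ time. A single tree gives only an average-case stretch guarantee; to control the worst case I would recurse on a contracted vertex-sparsifier graph obtained from an expander decomposition of the current (re)weighted graph, yielding a hierarchy of depth $O(\log m / \log\log m)$ in which each level degrades the approximation by only an $m^{o(1)}$ factor.

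The main obstacle is closing the loop between IPM stability and data-structure stability. Because $g$ and $L$ move only slightly per step, pieces of the expander decomposition and the trees remain valid for long stretches, but when the cumulative drift on a piece crosses a threshold the piece must be rebuilt. I would use a potential-function amortization that charges each rebuild against the drift it absorbed, showing total rebuild cost $m^{1+o(1)}$; this rebuilding schedule must be certified compatible with the IPM convergence proof, so the two analyses interleave tightly. Once a fractional optimum is produced, total unimodularity of the unit-capacity max-flow LP implies an integral optimum is nearby, and a cycle-cancelling or path-decomposition postprocessing rounds the solution to an integral max-flow in $\tilde O(m)$ time, certified optimal via Lemma~\ref{lem:FFoptimal}.
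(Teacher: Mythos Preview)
The paper does not prove this theorem at all: it is quoted verbatim as a black-box result from \cite{ChenKLPPS22} and used only through its running-time guarantee inside Theorem~\ref{thm: mainThm}. So there is no ``paper's own proof'' to compare against; the intended ``proof'' here is simply the citation.

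That said, your sketch is a fair high-level summary of the architecture of \cite{ChenKLPPS22}: a robust potential-reduction IPM that tolerates $m^{o(1)}$-approximate step directions, each step phrased as an undirected minimum-ratio cycle problem, solved in amortized $m^{o(1)}$ time via a dynamically maintained tree-chain built from $\ell_1$-oblivious routings (low-stretch trees plus expander-based sparsification), with a final rounding to an integral flow. Two cautions if you intend this as more than a pointer. First, the stability interface between the IPM and the data structure is the crux of that paper and is far from a routine amortization: one must show that the IPM's weight changes are \emph{sparse} in an $\ell_1$ sense (only $m^{o(1)}$ ``significant'' coordinate updates per step on average), not merely small in norm, and the data structure must be made oblivious-adversary-safe because its output feeds back into the IPM trajectory. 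Your plan gestures at this but does not isolate the sparsity lemma or the obliviousness requirement. Second, the hierarchy is not a plain recursion on contracted expander pieces; it is a carefully interleaved chain of $j$-trees and core-graph sparsifiers whose parameters must be tuned so that both the approximation blowup and the total recourse stay $m^{o(1)}$. As written, your outline would not compile into a proof without importing essentially the full machinery of \cite{ChenKLPPS22}; for the purposes of the present paper, citing that result is exactly what is required.
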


\paragraph{Approximate Maximum Flow.} 

To measure the quality of approximate maximum flows, we will use the notion of $\alpha$-approximations, which indicates that the value of the current flow solution is at least $1/\alpha$ of the optimum value. In other words, a flow $f$ is $\alpha$-approximate if $\approxF := v(f) \geq \frac{1}{\alpha} \optF$. 




\section{A framework for Incremental Approximate Maximum Flow}
In this section we show a simple generic algorithmic framework for maintaining a $(1+\epsilon)$-approximate maximum $s$-$t$ flow under edge insertions, i.e., prove Theorem~\ref{thm: mainThm}. Our construction is based on two important components: (i) incrementally maintaining maximum  $s$-$t$ flows whose value is upper bounded by some parameter $\boundK$ (one should think of $\boundK$ being small relative to the size of the network) and (ii) performing periodical rebuilds whenever the maximum $s$-$t$ flow of the current flow is larger than the parameter $\boundK$. 
This is a common approach in dynamic algorithms and was used e.g., by  Gupta and Peng~\cite{GuptaP13} in their dynamic algorithm for maintaining approximate matchings. 
We next further elaborate on the precise requirements of both components, and discuss how they lead to our algorithm.

To implement component (i), given a directed, unweighted graph $n$-vertex, $m$-edge graph $G$, any two vertices $s$, $t$, and a parameter $\boundK \in [0,n]$, the goal is to construct a data structure denoted by $\IncBMF(G,s,t,\boundK)$, or simply $\IncBMF$, that supports the following operations
\begin{itemize}
\itemsep0em
    \item \textsc{Initialize}$(G,s,t,\boundK)$: Initializes the data structure.
    \item \textsc{Insert}$(u,v)$: Insert the edge $(u,v)$ to $G$.
    \item \textsc{MaxFlow}$(s,t)$: Return the value of the maximum $s$-$t$ flow in the current graph $G$ if this value is smaller than $\boundK$.
\end{itemize}
Ideally, we would like that $\IncBMF(G,s,t,\boundK)$ supports edge insertions in amortized time proportional to the parameter $\boundK$, and queries in constant time. This would alone lead to an efficient incremental maximum flow algorithm whenever the current flow value is bounded by $\boundK$. 

When the maximum flow is large (i.e., component (ii)), we can make use of the \emph{stability} of maximum flow and periodically invoke a fast static algorithm. Concretely, first note that the value of the maximum $s$-$t$ flow changes by at most one per insertion. Therefore, if we have a large flow that is close to the maximum one, it will remain close to the maximum flow over a large number of updates. This naturally leads to the following simple but powerful approach: compute a flow at a certain time in the update sequence and do nothing for a certain number of updates as long as the flow is a good approximation to the maximum flow. This idea together with the data structure $\IncBMF(G,s,t,\boundK)$  yields an incremental algorithm for approximating $s$-$t$ maximum flow, which we formally describe below.


 Given a directed, unweighted graph $G=(V,E)$, any two vertices $s$, $t$, and two parameters $\epsilon>0, \boundK \in [0,n]$, our data structure maintains:
\begin{itemize}
\itemsep0em
\item a flow estimate $\approxF$ to the maximum $s$-$t$ flow $\optF$,
\item a counter $\counter$ indicating the number of operations since the last rebuild,
\item an incremental algorithm \textsc{IncBMF} for maintaining graphs with the maximum flow bounded by some parameter $(\boundK+1)$.
\end{itemize}

Initially, $G$ is an empty graph, $\approxF \gets 0$, $\counter \gets 0$, and we invoke the operation $\textsc{Initialize} (G,s,t,\boundK+1)$ of $\IncBMF$. Upon insertion of an edge $(u,v)$ to $G$, we 
query $\IncBMF$
to determine whether the $s$-$t$ maximum flow in the current graph is at most $\boundK$. If so,  we pass the edge insertion to the \textsc{\IncBMF} data structure and update $F$ accordingly. 

On the other hand, if the current maximum $s$-$t$ flow is larger than $\boundK$~(and our algorithm always correctly detects this since $\IncBMF$ run with the parameter $(\boundK+1)$ returns the correct answer),
we increment $\counter$, which counts the number of insertions since the last reset of $\counter$ that fall into this case.
If $\counter \ge \epsilon \mu$, we compute an exact maximum flow $F^*$ for the current graph from scratch using a static algorithm, update $F$ using the value of $F^*$ and
set $\counter = 0$. We call such a step a \emph{rebuild} step.
Observe that since we are in the insertions-only setting, once a maximum flow is larger than $\boundK$, it remains larger than that value. Finally, to answer a query about the maximum $s$-$t$ flow, we return $F$ as an estimate. These procedures are summarized in Algorithm~\ref{alg:incrementalAMF}.

\begin{algorithm2e}
	\caption{Incremental Approximate Maximum Flow (\textsc{IncApproxMF})}\label{alg:incrementalAMF}
	
	\Procedure{\Initialize{$ G = (V,E) $, $s$, $t$, $\epsilon$, $\boundK$}}{
		Set $E \gets 0$ and $F \gets 0$ \;
		Invoke $\textsc{IncBMF}.\textsc{Initialize}(G,s,t, \boundK+1)$
		
	}
	
	\Procedure{\Insert{$ u,v $}}{
		$E \gets E \cup \{(u,v)\}$ \;
		\If{\textup{$\textsc{IncBMF}.\textsc{MaxFlow}(s,t) \leq \boundK$}}
		{
			Invoke $\textsc{IncBMF}.\textsc{Insert}(u,v)$\;
			Set $F \gets \textsc{IncBMF}.\textsc{MaxFlow}(s,t)$
		}
		\Else{
	
		Set $\counter \gets \counter + 1$ \;
		\If{$\counter \ge \epsilon \boundK $}
		   {
		   	   Compute an $s$-$t$ maximum flow  in $G$ using a static algorithm \;
		   	   Set $\approxF$ to be the value of the flow computed in the previous step \;
		   	   Set $\counter \gets 0$\;
                   
		   }	
	}

	}
	
	\Procedure{\MaxFlow{$ s,t $}}{
		\Return $\approxF$
	}
	
\end{algorithm2e}

\begin{theorem}[Restatement of Theorem~\ref{thm: mainThm}]
Let $G=(V,E)$ be an initially empty, directed, unweighted $n$-vertex graph, $s$ and $t$ be any two vertices, and $\epsilon >0$, $\boundK \in [0,n]$ be two parameters. If there is 
\begin{enumerate}
\itemsep0em
    \item[\textup{(a)}] an incremental algorithm $\textup{\IncBMF}(G,s,t,\boundK)$ for inserting $m$ edges and maintaining $s$-$t$ maximum flow whose value is bounded by $\boundK$ in total update time $\totalTB(m,n,\boundK)$ and $q(m,n)$ query time, and
    \item[\textup{(b)}] a static algorithm for computing exact $s$-$t$ maximum flow in an $n$-vertex, $m'$-edge graph in $\timeMF(m',n)$ time, where $m' \leq m$,
\end{enumerate}
then we can design an incremental algorithm for maintaining a $(1+\epsilon)$-approximate $s$-$t$ maximum flow under $m$ edge insertions in \[ \frac{\totalTB(m,n,\boundK+1)}{m} +  \frac{\timeMF(m,n)}{\epsilon \boundK} + q(m,n) \]
amortized update time and $q(m,n)$ query time.	
\end{theorem}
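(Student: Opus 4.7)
The plan is to verify the two properties claimed by Algorithm~\ref{alg:incrementalAMF}: the $(1+\epsilon)$-approximation guarantee and the amortized update time. For correctness, the invariant to maintain is $\optF \le (1+\epsilon)\,\approxF$ after every operation. I would split the execution into two regimes. In the ``small-flow'' regime, \IncBMF\ run with parameter $(\boundK+1)$ correctly reports the exact current maximum $s$-$t$ flow (since the true value is at most $\boundK$), so the if-branch writes the true value into $\approxF$ and the invariant is trivial. Since insertions never decrease $\optF$, once the algorithm transitions into the else-branch (the ``large-flow'' regime), it stays there for the rest of the sequence, and thereafter $\approxF$ is refreshed only at rebuild steps.

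Inside the large-flow regime I would combine Fact~\ref{fact:IncreaseFlow}, which says each insertion raises $\optF$ by at most one, with the rebuild trigger $\counter \ge \epsilon\boundK$: between two consecutive refreshes of $\approxF$ at most $\epsilon\boundK$ insertions occur, so $\optF \le \approxF + \epsilon\boundK$ at all times. The companion invariant $\approxF \ge \boundK$ also holds throughout this regime, since the last if-branch assignment before the transition recorded the true max flow at the moment it first exceeded $\boundK$, and every subsequent rebuild overwrites $\approxF$ with the current true max flow, which again exceeds $\boundK$ because we remain in the insertion-only large-flow regime. Combining, $\optF/\approxF \le 1 + \epsilon\boundK/\boundK = 1+\epsilon$.

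For the running time I would partition the total work across the $m$ insertions into three buckets. The if-branch contributes $\totalTB(m,n,\boundK+1)$ for the \IncBMF\ insertion calls by assumption~(a), plus $O(m\cdot q(m,n))$ across the at-most-two \IncBMF\ queries per insertion. Pure else-branch steps without a rebuild add $O(1)+q(m,n)$ each. Each rebuild requires the counter to re-accumulate $\epsilon\boundK$ else-branch insertions, so there are at most $m/(\epsilon\boundK)$ of them, together costing at most $(m/(\epsilon\boundK))\cdot \timeMF(m,n)$ by assumption~(b). Dividing the total by $m$ yields exactly the claimed amortized bound, and \textsc{MaxFlow} queries trivially return $\approxF$ in $O(1) \le q(m,n)$ time. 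The main obstacle I anticipate is tightening the $\approxF \ge \boundK$ invariant around the moment of the first if-to-else transition, which is delicate because it depends on what \IncBMF\ returns right as the exact max flow crosses $\boundK$; I would handle it cleanly by separating the argument into the pre-first-rebuild episode and the between-rebuild episodes, and applying the $\epsilon\boundK$-growth bound inside each.
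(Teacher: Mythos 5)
Your proposal is correct and follows essentially the same argument as the paper: the small-flow regime is handled exactly by \textsc{IncBMF} run with bound $\boundK+1$, the large-flow regime uses the stability bound $\optF \le \approxF + \epsilon\boundK$ together with $\approxF \ge \boundK$ at each refresh (the paper resolves your anticipated transition subtlety precisely as you suggest, by treating the first entry into the else-branch as a pseudo-rebuild where $\approxF = \boundK+1$), and the running-time accounting into \textsc{IncBMF} work, per-insertion queries, and at most $m/(\epsilon\boundK)$ static recomputations matches the paper's amortization.
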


\begin{proof}
	We first prove the correctness of the algorithm. Let $G$ be the current graph and let $\approxF$ be the estimate maintained by the algorithm to the value of the maximum $s$-$t$ flow $\optF$ in $G$. 
 We will show that $\approxF$ is an $(1+\epsilon)$-approximation to $\optF$. 
 To this end, we distinguish the following three cases. (1) If $\optF \leq \boundK$, then by assumption of the theorem, the data structure \IncBMF ensures that $\approxF = \optF$ and thus our claim trivially holds. 
(2) If  $\optF = \boundK +1$, the  call 
\textup{$\textsc{IncBMF}.\textsc{MaxFlow}(s,t)$} returns 
the value $\boundK +1$ and, thus, the algorithm reaches the else-case for the first time (here we slightly abuse the notation and denote this as a \emph{rebuild} step).

 (3) If $\optF > \boundK + 1$, then this is not the first time that the algorithm reaches the else-case and, thus, there was a prior rebuild. Note that $\approxF$ corresponded to the value of some $s$-$t$ maximum flow at the last prior rebuild. This in turn implies that $\approxF$ must be larger than $\boundK$.
 Let $\optF_{0}$ be the value of the maximum $s$-$t$ flow of the graph at that rebuild. Since each edge insertions can increase the value of the maximum flow by at most $1$ and we recompute a new maximum flow every $\epsilon \boundK $ insertions, we have that $\optF \leq \optF_{0} + \epsilon \boundK$. Since $\optF_0 > \boundK $ and $F = \optF_{0}\ge 1$, bringing these together yields:
	\begin{align*}
	\frac{\optF}{\approxF} & \leq \frac{\optF_0 + \epsilon \boundK}{F} \leq \frac{(1+\epsilon ) \optF_0}{\optF_0} \leq 1+\epsilon,
	\end{align*}
	which proves our claimed approximation guarantee.
	
	We next study the running time. Note that our algorithm passes the edge insertions to the incremental algorithm \textsc{IncBMF} (invoked with the parameter $\boundK+1$) 
 only if the value of the maximum flow in the current graph is bounded by $\boundK$. Hence, by the theorem assumption, the total update time to handle these insertions is $\totalTB(m,n,\boundK+1) + mq(m,n)$. Amortizing the latter over $m$ insertions gives an amortize cost of $\totalTB(m,n,\boundK+1)/m + q(m,n)$, which in turn gives the first and the third term of our claimed running time guarantee.

 It remains to analyse the cost of periodical rebuilds. Note that if the current maximum flow value is larger than $\boundK$, our algorithm updates the estimate $F$ every $\epsilon \boundK$ operations. By assumption of the theorem, the time to compute an exact maximum flow is $\timeMF(m',n) \leq \timeMF(m,n)$ as $m' \leq m$. Charging this time over $\epsilon \boundK$ insertions, yields an amortized cost of $\timeMF(m,n)/(\epsilon \boundK)$, which in turn gives the second term our of claimed runtime guarantee and completes the proof of the theorem.
\end{proof}


\section{Incremental Bounded Maximum Flow}
\label{sec: incrementalBMaxFlow}

In this section we give an incremental algorithm for exactly maintaining the maximum flow as long as its value is bounded by a predefined parameter $\boundK$. The algorithm applies to directed, unweighted graphs, was initially observed by Henzinger~\cite{Henzinger97} and later by Gupta and Khan~\cite{GuptaK18}, and can be thought of as an incremental version of the celebrated Ford-Fulkserson algorithm~\cite{ford1962flows}. We review it below and slightly adapt it for our purposes.

Henzinger~\cite{Henzinger97} showed how to incrementally maintain maximum $s$-$t$ flow in $O(\optF)$ amortized update time, where $\optF$ is the value of the maximum flow in the final graph. As there are graphs where $\optF = \Omega(n)$, their running time guarantee is competitive only when $\optF$ is small, e.g., sub-linear on the size of the graph. We next show to slightly adapt their algorithm so that it maintains a maximum flow as long as its value is bounded by a parameter $\boundK$.

The key observation behind this algorithm is that the insertions of an (unit-capacitated) edge can only increase the maximum flow value by at most $1$. To check whether this value has increased, they use Lemma~\ref{lem:FFoptimal} as a certificate, i.e., one determines whether the insertion of the (forward) edge in the residual graph $G_f$ creates a directed path from $s$ to $t$ in $G_f$. A naive way to determine this is to run a graph search algorithm on $G_f$ after each insertions, which requires $\Omega(m)$ for a single update and is thus prohibitively expensive for our purposes. However, one can exploit a data structure due to Italiano~\cite{Italiano86} for incrementally maintaining single source reachability information from a source $s$ which requires $O(m)$ total update time for handling $m$ insertions. Let us briefly review this data structure before presenting the incremental algorithm.

\paragraph*{Incremental Single Source Reachability.}
In the incremental single source reachability problem, given an (initially empty) directed, unweighted graph $G=(V,E)$ and a distinguished vertex $s$, the goal is to construct a data structure $\textsc{IncSSR}$ that supports the following operations: (i) \textsc{Initialize}$(G,s)$: initialize the data structure in $G$ with source $s$, (ii) $\textsc{Insert}(u,v)$: insert the edge $(u,v)$ in $G$, and (iii) \textsc{Reach}$(u)$: return \texttt{True} if $u$ is reachable from $s$, and \texttt{False} otherwise. 

Italiano~\cite{Italiano86} observed that an incremental version of graph search leads to an efficient incremental $\textsc{IncSSR}$ data structure. The main idea is to maintain a reachability tree $T$ from $s$. Initially, the tree is initialized to $\{s\}$. Upon insertion of an edge $(u,v)$ to $G$, we need to update $T$ iff $u \in T$ and $v \not \in T$. If this is the case, we add $(u,v)$ to $T$ and make the $v$ the child of $u$. Moreover, the algorithm examines all outgoing neighbors $w$ incident to $v$, and if $w \not \in T$, processes the edge $(v,w)$ recursively using the same procedure. To answer queries, we return \texttt{True} if $u \in T$, and \texttt{False} otherwise. These procedures are summarized in Algorithm~\ref{alg:incrementalSSR}.

\begin{algorithm2e}[h]
	\caption{Incremental Single Source Reachability (\textsc{IncSSR})}\label{alg:incrementalSSR}
	
	\Procedure{\Initialize{$ G = (V,E) $, $s$}}{
	 	Set $T \gets \{s\}$ and $E \gets \emptyset$
	}
	
	\Procedure{\Insert{$ u,v $}}{
		$E \gets E \cup \{(u,v)\} $ \; 
		\UpdateTree{$u,v$}
	}

	\Procedure{\UpdateTree{$ u,v $}}{
	\If{$u \in T$ and $v \not \in T$}
		{
			Make $v$ a child of $u$ in $T$ \;
			\ForEach{$(v,w) \in E$}
			{
				\UpdateTree{$v,w$}
			}
		}
}
	
	\Procedure{\Reach{$ u $}}{
		\If{$u \in T$}
		{
			\Return \texttt{True}
		}
	    \Else{ \Return \texttt{False}}
	}
	
\end{algorithm2e}

The correctness of the data structure immediately follows by construction as we always maintain a correct rechability tree $T$ from $s$ for the current graph. For the running time, note that the total time over all insertions is $O(m)$ as each edge is processed at most $O(1)$ times, once when it is inserted into the graph and once when it is added to $T$.

\begin{lemma}[\cite{Italiano86}] \label{lem:incrementalSSR}
	Given an initially empty directed, unweighted graph $G=(V,E)$ and a source vertex $s$, the incremental algorithm $\textup{\textsc{IncSSR}}$ maintains reachability information from $s$ to every other node in $V$ while supporting insertions in $O(1)$ amortized update time and queries in $O(1)$ in worst-case time. 
\end{lemma}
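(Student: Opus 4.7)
The plan is to establish two things: (i) the invariant that after processing any sequence of insertions, the set $T$ equals exactly the set of vertices reachable from $s$ in the current graph $G$, and (ii) that the total work over $m$ insertions is $O(m)$, while a single query is a membership test.

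For correctness, I would argue by induction on the number of insertions. The base case is immediate, since in the empty graph only $s$ is reachable from itself, matching the initialization $T=\{s\}$. For the inductive step, assume $T$ equals the set of vertices reachable from $s$ in $G$ just before inserting $(u,v)$, and split into cases. If $u \notin T$, the tail of the new edge is unreachable, no new vertex becomes reachable, and \textsc{UpdateTree} indeed does nothing. If $u \in T$ and $v \in T$, then $v$ was already reachable, so every vertex reachable through the new edge was already reachable via a prior path to $v$, and again no update is needed. The only nontrivial case is $u \in T$ and $v \notin T$: here I need to show that the recursive exploration adds to $T$ exactly the vertices that are newly reachable. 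This amounts to proving that \textsc{UpdateTree}$(u,v)$ performs a depth-first search from $v$ in $G$ that inserts into $T$ every vertex that can be reached from $v$ through vertices currently outside $T$, which by induction coincides with the set of newly reachable vertices.

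For the running time, queries run in $O(1)$ worst-case provided $T$ is represented as a boolean array indexed by vertices. For the insertion bound, I would use a token/charging argument. Each call \textsc{UpdateTree}$(x,y)$ performs $O(1)$ work plus, only if $x \in T$ and $y \notin T$, an iteration over the outgoing edges of $y$. Each vertex $y$ can enter $T$ at most once across the entire update sequence, so the outgoing-edge scan at $y$ occurs at most once. Moreover, every \textsc{UpdateTree} call is either the one triggered by an \textsc{Insert} (one per insertion) or a recursive call $(y,w)$ produced by the outgoing-edge scan of $y$ (one per edge). Summing, the total work over $m$ insertions is $O(m)$, giving an amortized update time of $O(1)$.

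The main subtlety will be the correctness argument in the case $u \in T$, $v \notin T$: I need to be precise that the recursion truly captures every newly reachable vertex, since such a vertex $w$ may have many incoming edges while only one specific path from $s$ through $(u,v)$ is realized by the DFS. I would handle this by observing that, by the inductive hypothesis, any $w$ that becomes reachable only after inserting $(u,v)$ must lie on some path from $s$ that uses $(u,v)$, and then, by a secondary induction on the position of $w$ along that path after $v$, show that every such vertex is eventually added to $T$ by the recursion. The timing bounds are then a routine amortization and should present no further difficulty.
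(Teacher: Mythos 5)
Your proposal is correct and follows essentially the same route as the paper, which disposes of correctness ``by construction'' and bounds the total work by noting each edge is touched $O(1)$ times---once at insertion and once when its tail enters $T$. You simply make explicit the induction (including the key observation that every vertex on a path from $v$ to a newly reachable $w$ must itself be newly reachable) and the charging argument that the paper leaves implicit.
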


\paragraph{The Algorithm.} We now have all the necessary tools to present an incremental algorithm maintaining the maximum flow whose value is bounded by $\boundK$. Let $\optF$ denote the maximum $s$-$t$ flow value on the current graph. 

Initially, $G$ and the residual graph $G_f$ are empty graphs, $\optF \gets 0$ and $f(e) \gets 0$ for each $e \in E$. The algorithm proceeds in $\boundK$ rounds, where a round ends when the value of the current maximum flow increases by one.  Each round starts by initializing an incremental single source reachability data structure $\textsc{IncrSSR}$ from the source $s$~(Lemma~\ref{lem:incrementalSSR}) on the residual graph $G_f$. Upon an edge insertion $(u,v)$ to $G$, we pass the directed edge $(u,v)$ to the data structure $\textsc{IncrSSR}$ and test whether $t$ is reachable from $s$ using this data-structure. If the latter holds, then we find a simple directed $s$-$t$ path $P$ in $G_f$, which in turn serves as an augmenting path for $G$. We then send one unit of flow along the path $P$ in $G$ and update the current flow and its value accordingly. To answer a query about the maximum flow between $s$ and $t$, we simply return $\optF$. These procedures are summarized in Algorithm~\ref{alg:incrementalBMF}.

\vspace{0.5cm}
\begin{algorithm2e}[H]
	\caption{Incremental Bounded Maximum Flow (\textsc{IncBMF})}\label{alg:incrementalBMF}
	
	\Procedure{\Initialize{$ G = (V,E) $, $s$, $t$, $\boundK$}}{
		Set $E \gets \emptyset$ \;
        Set  $f(e) \gets 0$ for each $e \in E$,  $G_f \gets (V, E)$ and $\optF \gets 0$ \;
		Invoke $\textsc{IncSSR}.\textsc{Initialize}(G_f,s)$
		
	}
	
	\Procedure{\Insert{$ u,v $}}{
            \If{$\optF \leq \boundK$}
            {
		$E \gets E \cup \{(u,v)\}$ \;
		Invoke  $\textsc{IncSSR}.\textsc{Insert}(u,v)$ \; 
		\If{$\textup{\textsc{IncSSR}.\textsc{Reach}}(t)$}
		   {
			Find a simple directed $s$-$t$ path $P$ in $G_f$ \;
			Augment $f$ along the path $P$ in $G$ and let $f'$ be the resulting flow  \;
			Set $f \gets f'$ and $G_f \gets G_{f'}$ \;
			Set $\optF \gets \optF + 1$	\;
			Invoke $\textsc{IncSSR}.\textsc{Initialize}(G_f,s)$
		   }
            }

	}
	
	\Procedure{\MaxFlow{$ s,t $}}{
		\Return $\optF$
	}
\end{algorithm2e}
\vspace{0.5cm}

The correctness of this algorithm is immediate by Lemma~\ref{lem:FFoptimal}, which correctly tells us when to increase the value of the maximum flow, and Fact~\ref{fact:IncreaseFlow}, which asserts that the sending one unit of flow along an augmenting path increase the value of the flow by exactly one.

For the running time, note that each round requires $O(m)$ total time. As there are exactly $\boundK$ stages, we get a total update time of $O(m \boundK)$.

\begin{theorem} \label{thm:incrBMF}
	Given an initially empty directed, unweighted graph $G=(V,E)$ with $n$ vertices, any two vertices $s$ and $t$, and a parameter $\boundK \in [0,n]$, the algorithm \textup{\textsc{IncBMF}}$(G,s,t,\boundK)$ exactly maintains, under $m$ edge insertions, the maximum $s$-$t$ flow in $G$ whose value is bounded by $\boundK$ in $O(m \boundK)$ total update time and $O(1)$ query time.
\end{theorem}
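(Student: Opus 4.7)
The plan is to verify correctness first, then bound the total update time. Correctness is an essentially immediate consequence of the two ingredients already cited: Lemma~\ref{lem:FFoptimal} guarantees that as long as $t$ is \emph{not} reachable from $s$ in the residual graph $G_f$, the current flow $f$ is a maximum $s$-$t$ flow of the current graph; and Fact~\ref{fact:IncreaseFlow} guarantees that when $t$ \emph{is} reachable, pushing one unit of flow along any simple $s$-$t$ path in $G_f$ produces a new feasible flow of value exactly $\optF+1$. I would argue this by induction on the number of insertions: the invariant is that at any time, $\optF$ equals the value of the maximum $s$-$t$ flow in $G$ whenever that value is at most $\boundK$, and $f$ is a corresponding feasible flow with residual graph $G_f$. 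The \textsc{IncSSR} instance is maintained on the current $G_f$ (reinitialized every time we augment, since $G_f$ changes on the augmented edges), so after each insertion the \textsc{Reach}$(t)$ query correctly detects whether an augmenting path exists; by Lemma~\ref{lem:FFoptimal} the flow is optimal when it does not, and by Fact~\ref{fact:IncreaseFlow} augmenting restores the invariant when it does. The guard $\optF \le \boundK$ only prevents performing work past the stated bound, so the guarantee is exactly the conditional one claimed.

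For the running time, I would partition the execution into rounds, where a round begins at an \textsc{Initialize} call on \textsc{IncSSR} and ends immediately after the next augmentation (or at the end of the insertion sequence). Within a single round, the work consists of: (a) one \textsc{Initialize} of \textsc{IncSSR} on $G_f$, which costs $O(m)$ in the worst case; (b) some number of \textsc{IncSSR}.\textsc{Insert} calls, each with $O(1)$ amortized cost by Lemma~\ref{lem:incrementalSSR}, totalling $O(m)$ over the round; (c) at most one augmenting path search and update, which can be done by a single traversal of $G_f$ in $O(m)$ time (e.g., by reading off the path from the reachability tree $T$ maintained inside \textsc{IncSSR}, or by any linear-time graph search) and updating $f$ and $G_f$ along the $O(n) \subseteq O(m)$ edges of $P$. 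Hence each round costs $O(m)$.

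Finally, I would bound the number of rounds. Each successful augmentation increases $\optF$ by exactly one (Fact~\ref{fact:IncreaseFlow}); because the outer guard in \textsc{Insert} short-circuits as soon as $\optF > \boundK$, there are at most $\boundK$ rounds containing an augmentation, plus possibly one final round with no augmentation. Multiplying the per-round $O(m)$ cost by $O(\boundK)$ rounds yields a total update time of $O(m\boundK)$, and \textsc{MaxFlow} returns the maintained counter $\optF$ in $O(1)$, completing the proof. The only mildly subtle point is justifying that augmentation work and residual-graph rebuilding fit into the $O(m)$ per-round budget rather than being charged again to the \textsc{IncSSR} amortization, but this is handled by observing that the residual flips along a single $s$-$t$ path cost $O(m)$ in the worst case and occur at most once per round.
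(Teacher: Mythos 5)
Your proposal is correct and follows essentially the same approach as the paper: correctness via Lemma~\ref{lem:FFoptimal} and Fact~\ref{fact:IncreaseFlow}, and the running time by decomposing the execution into rounds delimited by augmentations, each costing $O(m)$, with $O(\boundK)$ rounds in total. You simply spell out the induction invariant and the per-round accounting in more detail than the paper does.
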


\section*{Acknowledgement}
We thank Richard Peng, Thatchaphol Saranurak, Sebastian Forster and Sushant Sachdeva for helpful discussions.

\erclogowrapped{5\baselineskip} \emph{M. Henzinger:} This project has received funding from the European Research Council (ERC) under the European Union's Horizon 2020 research and innovation programme (Grant agreement No. 101019564 ``The Design of Modern Fully Dynamic Data Structures (MoDynStruct)'' and from the Austrian Science Fund (FWF) project ``Fast Algorithms for a Reactive Network Layer (ReactNet)'', P~33775-N, with additional funding from the \textit{netidee SCIENCE Stiftung}, 2020--2024.

\small
\setlength\parskip{0cm}
\setlength\itemsep{0cm}
\bibliographystyle{plain}
\bibliography{literature_ISO}

\end{document}